\def\usetikz{0} % 1 to use tikz, 0 to load figs from .pdf
	\pgfplotsset{compat=1.14}
	\tikzset{
	    % Defines a custom style which generates BOTH, .pdf and .png export but prefers the .png on inclusion.
	    png export/.style={
			external/system call/.add={}
	        && convert -density 600 -transparent white "\image.pdf" "\image.png", %requires imagemagick; windows syntax
	        /pgf/images/external info,
	        /pgf/images/include external/.code={%
	            \includegraphics
	                [width=\pgfexternalwidth,height=\pgfexternalheight]
	                {##1.png}%
	        },
		} 
	}
\crefname{lem}{Lemma}{Lemmas}
\newtheorem{thm}{Theorem} 
\crefname{thm}{Theorem}{Theorems}
\crefname{cor}{Corollary}{Corollaries}
\crefname{prop}{Proposition}{Propositions}
\theoremstyle{definition}
\newtheorem{rmk}{Remark}
\crefname{rmk}{Remark}{Remarks}
\crefname{defn}{Definition}{Definitions}
\newtheorem{ex}{Example}
\crefname{ex}{Example}{Examples}
\newtheorem{assum}{Assumption}
\crefname{assum}{Assumption}{Assumptions}
\crefname{appendix}{Appendix}{Appendices}
\crefname{section}{Section}{Sections}
\crefname{table}{Table}{Tables}
\newcommand{\setreal}{\mathbb{R}}
\newcommand{\transp}{\mathsf{T}}
\newcommand{\taumin}{\underline{\tau}}
\newcommand{\taumax}{\overline{\tau}}
\newcommand{\Pdiag}{\bar{P}}
\newcommand{\Rdiag}{\bar{R}}
\DeclareMathOperator*{\diag}{diag}
\newcommand{\estnot}[1]{{\hat{#1}}}
\newcommand{\yest}{\estnot{v}}
\newcommand{\west}{\estnot{w}}
\newcommand{\thetaest}{\estnot{\theta}}
\newcommand{\vest}{\estnot{v}}
\newcommand{\trunot}[1]{#1}
\newcommand{\ytru}{\trunot{v}}
\newcommand{\wtru}{\trunot{w}}
\newcommand{\thetatru}{\trunot{\theta}}
\newcommand{\utru}{\trunot{u}}
\newcommand{\vtru}{\trunot{v}}
\newcommand{\nernsttru}{\trunot{\nu}}
\newcommand{\col}{\mathrm{col}}
\newcommand{\gain}{\gamma}
\newcommand{\bv}{a}
\newcommand{\fw}{g}
\newcommand{\fvvirt}{\tilde{f}}
\newcommand{\vhalf}{\rho}
\newcommand{\slope}{\kappa}
\newcommand{\mean}{\zeta}
\newcommand{\std}{\chi}
\newcommand{\nernst}{\nernsttru}
\newcommand{\lowerbound}{\underline{\sigma}}
\newcommand{\upperbound}{\overline{\sigma}}
\newcommand{\Ytru}{V}
\newcommand{\nv}{{n_v}}
\newcommand{\nwj}{{n_w^j}}
\newcommand{\ninp}{{n_u}}
\newcommand{\ny}{{n_v}}
\newcommand{\ntheta}{{n_\theta}}
\newcommand{\nthetaj}{{n_\theta^j}}
\newcommand{\Na}{{\textrm{Na}}}
\newcommand{\K}{{\textrm{K}}}
\newcommand{\GABA}{{\textrm{G}}}
\newcommand{\Leak}{{\textrm{leak}}}
\newcommand{\ion}{{\rm{ion}}}%{j}
\newcommand{\syn}{{\rm{syn}}}
\title{\LARGE \bf
Distributed online estimation of biophysical neural networks*
}
\author{Thiago B. Burghi$^{1}$, Timothy O'Leary$^{1}$ and Rodolphe Sepulchre$^{1}$% <-this % stops a space
	\thanks{*The research leading to these results has received funding from the European Research Council under the ERC grant agreement FLEXNEURO n.716643.}% <-this % stops a space
	\thanks{$^{1}$Thiago B. Burghi, Timothy O'Leary  and Rodolphe Sepulchre are with the Department of Engineering, University of Cambridge, Cambridge CB2 1PZ, United Kingdom. E-mails: {\tt\small tbb29@cam.ac.uk}, {\tt\small tso24@cam.ac.uk}, {\tt\small rs771@cam.ac.uk}.}
}
\begin{document}

\maketitle
\thispagestyle{empty}
\pagestyle{empty}

\begin{abstract}

In this work, we propose a distributed adaptive observer for a class of nonlinear networked systems inspired by biophysical neural network models. Neural systems learn by adjusting intrinsic and synaptic weights in a distributed fashion, with neuronal membrane voltages carrying information from neighbouring neurons in the network. We show that this learning principle can be used to design an adaptive observer based on a decentralized learning rule that greatly reduces the number of observer states required for exponential convergence of parameter estimates. This novel design is relevant for biological, biomedical and neuromorphic applications.
\end{abstract}

\section{Introduction}

With the improvement of neural recording technology, it may soon be possible to concurrently monitor the membrane potential of hundreds of interconnected neurons in a living brain \cite{knopfel_optical_2019}.  This high-resolution data opens up new possibilities for the development of real-time closed-loop interventions aimed at treating disorders of neural excitability such as epilepsy and Parkinson's  \cite{tang_colloquium_2018}. The capability to effectively monitor and control spiking systems also impacts the nascent field of neuromorphic engineering \cite{ribar_neuromodulation_2019}.

Good closed-loop control design often requires reliable model estimates, and hence any method aimed at controlling neural activity is bound to involve the estimation of neuronal models, which is a nontrivial task. Many techniques have been proposed for \textit{batch-mode} or \textit{offline} estimation of neuronal dynamics, see for instance \cite{huys_efficient_2006,druckmann_novel_2007,meliza_estimating_2014,burghi_feedback_2021}.
However, living brain systems are adaptive \cite{sorrell_brainmachine_2021}, and thus online estimation approaches are necessary, especially if real-time applications are involved.

To meet this demand, an adaptive observer-based approach for online estimation of conductance-based neural networks was recently proposed in \cite{burghi_adaptive_2021}. The adaptive observers, inspired by \cite{zhang_adaptive_2001} and \cite{farza_adaptive_2009}, are rooted in the familiar Recursive Least Squares (RLS) algorithm \cite{astrom_adaptive_2008}, and allow for approximately tracking slowly time-varying parameters. One limitation of RLS-based adaptive observers is the rapid  increase in observer states with respect to the number of parameters. More observer states require more computing power, which might become critical when attempting to perform online estimation of large neuronal network models containing thousands of parameters. In this paper, we propose a distributed version of the linear-in-the-parameters adaptive observer from \cite{burghi_adaptive_2021} that results in a scalable algorithm for online parameter estimation of biophysical neural models. The proposed modification, which echoes the diagonal RLS-like update rule of  \cite{vahidi_recursive_2005}, greatly reduces the number of adaptive observer states. We show that for neuronal network models, the proposed adaptive observer becomes distributed over individual neuronal membrane currents and also over neurons in the network. We analyse the adaptive observer using contraction theory \cite{lohmiller_contraction_1998}, and show that a strengthened  persistent excitation condition is sufficient for consistent convergence of the parameter estimates.

The paper is organized as follows: in \cref{sec:background}, we pose the problem from an abstract point of view and recall the observer from \cite{burghi_adaptive_2021}. In \cref{sec:distributed_observer}, we introduce and analyse the modified distributed observer. In \cref{sec:CB_models}, we use the observer to estimate conductance-based biophysical neural networks. In \cref{sec:discussion}, we discuss the relevance of this work in neuroscience, as well as future research directions.

\textit{Notation:} 
We write $I_n$ for the $n \times n$ identity matrix, and $I$ when $n$ is obvious from the context. 
For two column vectors $x$ and $y$, we write $\col(x,y) := (x^\transp,y^\transp)^\transp$. For a matrix $A \in \setreal^{n\times n}$, $\lambda_{\max}(A)$ denotes the largest eigenvalue of $A$.
% For a matrix $A \in \setreal^{n\times m}$, $\|A\|$ denotes the spectral norm (the largest singular value of $A$). 
%For a vector $x \in \setreal^{n}$ and a symmetric
%matrix $P \in \setreal^{{n}\times {n}}$, we write 
%$\|x\|_P^2 := x^\transp P x$, and $\|x\| := \|x\|_{I}$.
For a vector-valued function $f:\setreal^{n_1} \times \setreal^{n_2} \to \setreal^m$, we write $\partial_x f(x,y) \in \setreal^{m \times n_1}$ for the Jacobian of $f(x,y)$ with respect to $x$. We write $A \succeq B$ ($A \succ B$) if  $A-B$ is a positive-semidefinite (positive-definite) matrix. 

\section{Background}
\label{sec:background} 

We consider nonlinear state-space systems of the form
\begin{subequations}
	\label{eq:system} 
	\begin{align}
		\dot{v} &= \sum_{j=1}^m 
		\Phi_j^\transp(v,w^j,u)
		\theta^j + \bv(v,w,u) \\
		\label{eq:dw_true} 
		\dot{w}^j &= \fw_j(v,w^j)
	\end{align}
\end{subequations}
for $j=1,\dotsc,m$. Here, $v \in \setreal^\nv$ is a state vector, which is also the output of the system; $w = \col(w^1,\dotsc,w^m)$ is an internal dynamics state vector, with $w^j \in \setreal^{\nwj}$; $u\in\setreal^\ninp$ a control input vector; and $\theta = \col(\theta^1,\dotsc,\theta^m)$ is a parameter vector, with $\theta^j \in \setreal^{{\nthetaj}}$. The matrices $\Phi_j(v,w^j,u) \in \setreal^{\nthetaj\times\nv}$ and the vectors $\bv(v,w,u)\in\setreal^{\nv}$ and $\fw_j(v,w^j)\in\setreal^{\nwj}$ are assumed to be continuously differentiable in their arguments. We will also use the more compact notation
\begin{equation*}
	\Phi^\transp(v,w,u) := 
	\begin{bmatrix}
		\Phi_1^\transp(v,w^1,u) & \dotsc & \Phi^\transp_m(v,w^m,u)
	\end{bmatrix}
\end{equation*}
and
\begin{equation*}
	\fw(v,w) := \col(
		\fw_1(v,w^1),\dotsc,\fw_m(v,w^m)) \;.
\end{equation*}
The role of an adaptive observer is to provide an online estimation of the states and the parameters of the system from measurements of the input $u(t)$ and output $v(t)$.

The structure of  \eqref{eq:system} is motivated by models of neuronal dynamics \cite{izhikevich_dynamical_2007}. The state $v$ represents a vector of membrane voltages in a neural network, while $w$ represents a vector of \textit{gating variables} that dictate ion channel and synaptic dynamics. This specific application is discussed in \cref{sec:CB_models}.
%System \eqref{eq:system} also relates to various parametrizations previously considered in the adaptive observer literature. If $\bv$ is a linear map, and if we assume that each $\Phi_j$ does not depend on the states $v$ and $w^j$ (in which case each $\Phi_j$ is a known time-varying term), then \eqref{eq:system} reduces to the linear model parametrization considered in \cite{zhang_adaptive_2001}. If each $\Phi_j$ depends on $v$, but not on $w^j$, and if $\bv$ is linear in $w$, then \eqref{eq:system} belongs to the model class considered in \cite{farza_adaptive_2009}.

In our problem formulation, we assume that the trajectories of the system \eqref{eq:system} evolve in a compact positively invariant set, and that the internal dynamics of \eqref{eq:system} are exponentially contracting \cite{lohmiller_contraction_1998}, uniformly in $v$:

\begin{assum}
	\label{assum:true_invariant_set} 
		There exists a compact set $\Ytru \times U$
		such that $\{\ytru(t),\utru(t)\} \in \Ytru \times U$
		for all $t \ge 0$.
\end{assum}

\begin{assum}
	\label{assum:int_dyn_contraction}
	For each $j = 1,\dotsc,m$, there exists a compact convex set $W_j$ which is positively invariant with respect to \eqref{eq:dw_true}, uniformly in $\ytru$ on $\setreal^\nv$.	Furthermore, there exist a symmetric matrix $M_j(t)=\Theta_j(t)^\transp\Theta_j(t)$ such that
%	\begin{equation}
%		\label{eq:internal_metric_bounds} 
		$\lowerbound I \preceq M_j(t) \preceq \upperbound I$
%	\end{equation}
	for some $\lowerbound,\upperbound>0$, and a contraction rate $\lambda_j > 0$ such that the generalized Jacobian
	\begin{equation}
		\label{eq:Mint}
		F_j := (\dot{\Theta}_j + \Theta_j \partial_{w^j} \fw_j(\ytru,\wtru^j)) \Theta_j^{-1}
%		Q_j = \partial_{w^j} \fw^\transp_j(\ytru,\wtru^j) M_j 
%		+ M_j \, \partial_{w^j} \fw_j(\ytru,\wtru^j) +
%		\dot{M}_j
	\end{equation}
	satisfies
	\begin{equation}
	\label{eq:contraction} 
%			Q_j \preceq  -2\lambda_j M_j
		F_j+F_j^\transp \preceq -\lambda_j I
	\end{equation}
	for all $\{\ytru,w^j\} \in \setreal^{\ny} \times W_j$ and
	all $t \ge 0$.
\end{assum}

In \cite{burghi_adaptive_2021}, an adaptive observer-based approach (inspired by the earlier designs of \cite{zhang_adaptive_2001} and \cite{farza_adaptive_2009}) was proposed to estimate the parameter vector $\theta = \col(\theta^1,\dotsc,\theta^m)$ in real-time. In the present work, our point of departure is the adaptive observer given by
\begin{subequations}
	\label{eq:adaptive_observer}
	\begin{align}
%		\label{eq:dy} 
		\nonumber
	  	\dot{\yest} &= \Phi^\transp(\ytru,\west,\utru)\thetaest 
		+ \bv(\ytru,\west,\utru)
	  	+ \gain(I +\Psi^\transp P \Psi)(\ytru-\yest)
	  	\\
%	  	\label{eq:dw}
		\label{eq:adaptive_states}
	  	\dot{\west} &= \fw(\ytru,\west) 
	  	\\
%	  	\label{eq:dtheta}
		\nonumber
	  	\dot{\thetaest} &= \gain P \, 
	  	\Psi \, (\ytru-\yest)
%	\end{split}
	\end{align}
	where the matrices $P$ and $\Psi$ 
	evolve according to
\begin{align}	
%	\begin{equation}
%	\label{eq:adaptive_matrices} 
%	\begin{aligned}
	  	\label{eq:dPsi}
		  	\dot{\Psi} &= 
			- \gain \Psi + \Phi(\ytru,\west,\utru) \;,
	  	\\
	  	\label{eq:dP} 
	  	\dot{P} &= \alpha P -  
	  	\alpha P \, \Psi \Psi^\transp P,
	  	\quad\quad P(0) \succ 0,
%	\end{aligned}
%	\end{equation}	
\end{align}
\end{subequations}
with $\gamma>\alpha > 0$. Under \cref{assum:true_invariant_set,assum:int_dyn_contraction}, and under the persistent excitation (PE) assumption that
\begin{equation}
\label{eq:PE_condition} 
\exists \, T>0, \; \forall t \ge 0 \;  : \;
 \underline{\delta} I \preceq \int_{t}^{t+T}
	\Psi(\tau) \Psi^\transp(\tau) d\tau \preceq \overline{\delta} I
\end{equation}
for some $\overline{\delta},\underline{\delta}>0$,
%The signals $\ytru(t)$ and $\utru(t)$ are such 	that for any trajectory of \eqref{eq:adaptive_observer_dist}, there exist $T>0$ 	and $\delta>0$ such that for all $t\ge0$,
%	\[
%	\delta_j I \preceq \int_{t}^{t+T}
%	\Psi_j(\tau) \Gamma_0 \Psi^\transp_j(\tau) d\tau
%for all $j=1,\dotsc,m$.
it can be shown that the adaptive observer state vector $\col(\hat{v}(t),\hat{w}(t),\hat{\theta}(t))$ converges to $\col(v(t),w(t),\theta)$ exponentially fast as $t \to \infty$ (see \cite[Theorem 1]{burghi_adaptive_2021} and its proof).

The adaptive observer \eqref{eq:adaptive_observer} relies on the $\ntheta \times \ntheta$ matrices $P(t)$ and $\Psi\Psi^\transp$ to update the parameter estimates $\thetaest$. 
%This matrix can be understood as a Kalman gain or equivalently as a running estimate of the covariance of the estimation error \cite{astrom_adaptive_2008}. 
From a computational point of view, when the number $\ntheta = \sum_{j=1}^m \nthetaj$ is large, updating the $(\ntheta)^2$ states of $P(t)$ becomes costly. In this paper, we are interested in redesigning the adaptive observer above so as to decrease the number of required observer states in $P(t)$. 

We will explore the simple idea that the matrix $\Psi\Psi^\transp$ can be approximated by its (block) diagonal elements $\Psi_j\Psi_j^\transp$ under suitable assumptions.
This will lead to a \textit{decoupled} version of the update rule for each  component $\theta^j$ of $\theta$ which is in addition \textit{distributed} with respect to the internal dynamics states $w^j$. 

\section{Distributed adaptive observer}
\label{sec:distributed_observer} 
% INTUITION: the filter, before, was 
% G(s) = gamma/(s+gamma)
% Now, the filter is 
% G(s) = \Gamma (sI + \Gamma_0)^{-1}

We now consider the adaptive observer design given by
\begin{subequations}
	\label{eq:adaptive_observer_dist}
	\begin{align}
%	\begin{split}
	\label{eq:dv_dist}
	  	\dot{\yest} &= 
		\sum_{j=1}^m \Phi^\transp_j(v,\west^{j},u)\thetaest^{j} 
		+ \bv(\ytru,\west,\utru)
	  	\\ \nonumber
	  	 & \hspace{4em} + \big(\gamma_0 I + \gamma_j \sum_{j=1}^m \Psi_j^\transp P_j \Psi_j\big)(\ytru-\yest)
	  	\\
	  	\label{eq:dw}
	  	\dot{\west}^j &= \fw_j(\ytru,\west^j)
	  	\\
	  	\label{eq:dtheta_dist}
	  	\dot{\thetaest}^j &= \gain_j P_j \, 
	  	\Psi_j \, (\ytru-\yest)
%	\end{split}
	\end{align}
	where 
%	$\Gamma_0 \succ 0$ and 
$\gamma_0,\gamma_1,\dotsc,\gamma_m>0$ are constant gains, and the matrices $P_j$ and $\Psi_j$ evolve according to
	\begin{align}
%	\label{eq:adaptive_matrices_dist} 
%		\begin{aligned}
	  		\label{eq:dPsi_dist}
	  		\dot{\Psi}_j &= - \gamma_j \Psi_j 
	  		+ \Phi_j(\ytru,\west^j,u), &\quad \Psi_j(0) = 0,
	  		\\
	  		\label{eq:dP_dist} 
	  		\dot{P}_j &= \alpha_j P_j -  
	  		\alpha_j P_j \, \Psi_j \Psi^\transp_j P_j, &\quad P_j(0)\succ 0,
%		\end{aligned}
	\end{align}	
\end{subequations}
for all $j=1,\dotsc,m$ (there is no loss of generality in the choice of  $\Psi_j(0)$ above). 

The adaptive observer  \eqref{eq:adaptive_observer_dist} is obtained 
from \eqref{eq:adaptive_observer} by 
ignoring the off-diagonal terms in $\Psi\Psi^\transp$, enforcing a block-diagonal structure in the matrix $P(t)$, and allowing for different learning gains $\gamma_j>0$ and forgetting rates $\alpha_j>0$.
In the new design, the adaptation law given by \eqref{eq:dtheta_dist}-\eqref{eq:dP_dist} is distributed with respect to each ``regressor'' $\Phi_j^\transp$ and its associated internal dynamics \eqref{eq:dw}. The total number of states in the matrices $P_j$ for $j=1,\dotsc,m$ is now reduced to $m\times \sum_{j=1}^m (\nthetaj)^2 \le (\ntheta)^2$. In addition, the learning gains and forgetting rates can now be chosen independently, which might be beneficial to account for different timescales in the internal dynamics components \eqref{eq:dw_true}. These timescales correspond to the contraction rates $\lambda_j$ in \cref{assum:int_dyn_contraction}.

\begin{rmk}
The distributed adaptive observer \eqref{eq:adaptive_observer_dist} was designed by modifying the design of \cite{burghi_adaptive_2021}, which exactly solves a Recursive Least Squares (RLS) problem with exponential forgetting \cite[Chapter 2]{astrom_adaptive_2008}. In a similar spirit, a modification of the RLS algorithm based on a diagonal update rule was previously proposed by \cite{vahidi_recursive_2005}.
%and by , who showed that a diagonal update rule for RLS can be derived from a regularized least-squares problem. 
However, not much is known about the convergence properties of such diagonal update rules, and they do not seem to have been considered in the context of adaptive observers.
\end{rmk}

\begin{rmk}
Many authors have studied the related problem of \textit{directional forgetting} in  RLS estimation, e.g. \cite{bittanti_convergence_1990,parkum_recursive_1992,fraccaroli_new_2015}.  The adaptive observer above is not based on directional forgetting methods, however, since such methods still consider off-diagonal terms of $\Psi\Psi^\transp$ in the update rule. It is also worth mentioning that the diagonal update rule above is reminiscent of the idea of \textit{splitting across features} when solving regression problems with the Alternating Direction Method of Multipliers \cite{boyd_distributed_2010}.
\end{rmk}

\begin{rmk}
While the rather general PE condition \eqref{eq:PE_condition} is a sufficient condition for the consistent exponential convergence of the adaptive observer \eqref{eq:adaptive_observer}, the same condition may not be sufficient for exponential convergence of the adaptive observer with the diagonal update rule \eqref{eq:adaptive_observer_dist}. The reason can be illustrated by considering the simplified linear model
	\[
		y = \gamma \Psi(t)^\transp \theta
	\]
%	which can be obtained from \eqref{eq:system} by setting  
%	$
%	y = \gamma\int_0^t e^{-\gamma(t-\tau)} \dot{v}(\tau) d\tau,
%	$
%	$\bv \equiv 0$, and removing the internal dynamics \eqref{eq:dw_true} and its states $w^j$. 
with known $\Psi(t)$. It is well-known that the update rule 
	\begin{equation}
		\label{eq:RLS_simple} 
		\dot{\thetaest} = P \Psi (y-\gamma\Psi^\transp \thetaest)
	\end{equation}
	with $P$ given by \eqref{eq:dP} ensures that $\thetaest(t) \to \theta$ exponentially fast as long as the PE condition \eqref{eq:PE_condition} holds. Indeed, the PE condition ensures that $P(t)$ is uniformly positive definite and bounded above \cite{zhang_adaptive_2001}, leading to the Lyapunov function candidate
%	\begin{equation*}
%		\label{eq:lyap_simple} 
		$V(t,\tilde{\theta}) = \tilde{\theta}^\transp P^{-1}(t) \, \tilde{\theta}$	
%	\end{equation*}
	where $\tilde{\theta} := \theta - \thetaest$. This function satisfies
	\[
		\dot{V}(t,\tilde{\theta}) = - \tilde{\theta}^\transp ( \alpha P^{-1} + (2\gamma-\alpha)\Psi \Psi^\transp )\tilde{\theta} \;,
	\]
	which is uniformly negative definite for $\gamma>\alpha/2$. 
	Suppose now that we replace the update rule \eqref{eq:RLS_simple} by its diagonal counterpart 
	\[
		\dot{\thetaest} =  \bar{P} \Psi(y-\gamma\Psi^\transp \hat{\theta})\;,
	\]
	with 
	$
		\bar{P} := \diag\{P_1,\dotsc,P_m\}
	$
	and 	$P_j(t)$ given by \eqref{eq:dP_dist}, where for simplicity we take $\alpha_j = \alpha$. Persistent excitation with respect to each $\Psi_j\Psi_j^\transp$ still ensures that $\bar{P}(t)$ is uniformly positive definite and bounded above, but PE alone is not sufficient to ensure the uniform negative definiteness of the derivative of the Lyapunov function candidate $\bar{V}(t,\tilde{\theta}) = \tilde{\theta}^\transp \bar{P}^{-1}(t) \,\tilde{\theta}$.
%Extra assumptions can however enforce that property; for instance, assuming that $\tfrac{d}{dt} P_j^{-1}(t) \preceq 0$ for all $t \ge 0$, the Lyapunov function candidate
%\begin{equation*}
%%	\label{eq:Vbar} 
%	\bar{V}(t,\tilde{\theta}) = \tilde{\theta}^\transp \bar{P}^{-1} \,\tilde{\theta}
%\end{equation*}
%could be used to prove exponential stability of the error dynamics. This assumption is however rather restrictive as it puts severe constraints on the behavior of the matrices $\Phi_j$.
\end{rmk}

The previous remark illustrates that the convergence analysis of the distributed observer requires further investigation. In the same vein as \cite{aeyels_exponential_1999,jouffroy_relaxed_2003}, we shall give up on negative definiteness (or semidefiniteness) of the derivative of the Lyapunov function. To state our final assumption, we define the matrices
\begin{equation}
	\label{eq:gain_matrices} 
	\begin{split}
	\Pdiag &:= \diag\{P_1,\dotsc,P_m\} \;, \\
	\Gamma &:= \diag\{\gamma_1 I_{n_\theta^1} ,\dotsc,\gamma_m I_{n_\theta^m} \} \;, \\
	A &:= \diag\{\alpha_1 I_{n_\theta^1} ,\dotsc, \alpha_m I_{n_\theta^m}\}\;, \\
	D &:= \diag\{\Psi_1\Psi_1^\transp,\dotsc, \Psi_m\Psi_m^\transp\}\;,
	\end{split}
\end{equation}
where the $P_j$ above come from \eqref{eq:dP_dist}.

\begin{assum}
	\label{assum:required_assumption} 
	For all $v(t)$, $u(t)$, there exists a
	$T>0$ such that for all $t \ge 0$, the following hold:
	\begin{enumerate}[(i)]
		\item For each $j=1,\dotsc,m$, we have
		\[
		\int_t^{t+T} \Psi_j(\tau)\Psi_j(\tau)^\transp d\tau \succeq \delta_j
		\]
		for some $\delta_j>0$.
		\item Let $\underline{\alpha}=\min\{\alpha_1,\dotsc,\alpha_m\}$. Then there exists a $\beta > 0$ such that 		
		\begin{equation*}
%	\int_t^{t+T} \lambda_{\max}(\Pdiag)
%	\lambda_{\max}(
%	AD- \tfrac{1}{2} (\Gamma \Psi \Psi^\transp + \Psi \Psi^\transp \Gamma))	d\tau
%%	\gamma_0\Psi(\tau)\Psi^\transp(\tau) d\tau 
%	\le \underline{\alpha} T
	\begin{split}
	\frac{1}{T}
	\int_t^{t+T} 
	\lambda_{\max}(
	AD & + \gamma_0 \Psi\Psi^\transp - \Gamma \Psi \Psi^\transp - \Psi \Psi^\transp \Gamma)	d\tau
%	\gamma_0\Psi(\tau)\Psi^\transp(\tau) d\tau 
	\\
	&\le 
%	\underline{\delta}\,\underline{\alpha}^2  e^{-2\,\overline{\alpha}\,T} - \beta
	(\underline{\alpha} - \beta)\times{\min_{j} \delta_j \alpha_j e^{-2\,\alpha_j,T} } \;.
	\end{split}
	\end{equation*}		
	\end{enumerate}
%	where
%	\[
%		\rho_j(t,T) = 
%		\alpha_j \int_t^{t+T} e^{-\alpha_j(t+T-\tau)} \Psi_j(\tau)\Psi^T_j(\tau)d\tau 
%	\]
\end{assum}

%Notice that the closer the $\Psi_j$ are to being orthogonal to each other, the easier it will be to satisfy
%\cref{assum:required_assumption}, since in this case the off-diagonal terms of $\Psi\Psi^\transp$ integrate to decreasingly smaller values.

We can now state our main theoretical result:
%To analyse the convergence of the distributed adaptive observer \eqref{eq:adaptive_observer_dist}, we will need the following assumption:
%
%\begin{assum}
%	\label{assum:persistent_excitation}
%	The signals $\ytru(t)$ and $\utru(t)$ are such 	that for any trajectory of \eqref{eq:adaptive_observer_dist}, there exist $T>0$ 	and $\delta>0$ such that for all $t\ge0$,
%	\[
%	\delta_j I \preceq \int_{t}^{t+T}
%	\Psi_j(\tau) \Gamma_0 \Psi^\transp_j(\tau) d\tau
%	\]
%	for all $j=1,\dotsc,m$.
%\end{assum}

\begin{thm}
	\label{thm:convergence}
	Consider the true system \eqref{eq:system} and the adaptive observer \eqref{eq:adaptive_observer_dist}. Under \cref{assum:true_invariant_set,assum:int_dyn_contraction,assum:required_assumption},
	for any $\vest(0) \in \setreal^{\ny}$, $\west(0) \in \prod_j W_j$, and 	$\thetaest(0) \in \setreal^{\ntheta}$, 
%	$\xest(0)\in	\setreal^{\ny} \times W \times \setreal^{\ntheta}$, 
%	$\col(\vest(0),\west(0),\thetaest(0)) \in	\setreal^{\ny} \times W \times
%	\setreal^{\ntheta}$, 
	we have	
	\[
		\col(\vest(t),\west(t),\thetaest(t)) \to \col(\vtru(t),\wtru(t),\thetatru)
	\]
%	$\xest(t) \to \xtru(t)$ 
	exponentially fast as $t \to \infty$.
\end{thm}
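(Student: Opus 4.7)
The plan is a cascade Lyapunov argument that splits the theorem into (i) contraction of the gating dynamics, (ii) a Zhang-style change of variable that partially decouples the voltage error, and (iii) a relaxed Lyapunov inequality for the parameter error. For (i), \cref{assum:int_dyn_contraction} applied to $w^j$ and $\west^j$ (both driven by the measured $\ytru$ through the same vector field $\fw_j$) yields $|w^j(t)-\west^j(t)|\le c\,e^{-\lambda_j t/2}|w^j(0)-\west^j(0)|$ through the Riemannian metric $M_j$. Since trajectories stay in a compact set by \cref{assum:true_invariant_set}, Lipschitz continuity of $\Phi_j$ and $\bv$ turns this into an exponentially decaying exogenous signal $\varepsilon(t)$ that enters the voltage and parameter error dynamics.

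For (ii), I would introduce the auxiliary error $\eta := (\ytru-\yest) - \Psi^\transp \tilde{\theta}$, where $\tilde{\theta} := \theta-\thetaest$ and $\Psi^\transp := [\Psi_1^\transp\;\cdots\;\Psi_m^\transp]$. A direct computation using \eqref{eq:dv_dist}, \eqref{eq:dPsi_dist} and \eqref{eq:dtheta_dist}, in which the $\sum_j\gamma_j\Psi_j^\transp P_j\Psi_j(\ytru-\yest)$ feedback exactly cancels, gives
\[
\dot{\eta} = -\gamma_0\,\eta + \sum_{j=1}^m(\gamma_j-\gamma_0)\,\Psi_j^\transp \tilde{\theta}^j + \varepsilon(t).
\]
When the learning gains coincide this decouples $\eta$ from $\tilde{\theta}$ entirely; in general the $\eta$ dynamics are input-to-state stable from $\tilde{\theta}$ with exponential decay rate $\gamma_0$.

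For (iii), consider the composite candidate $V(t,\eta,\tilde{\theta}) := \mu\,|\eta|^2 + \tilde{\theta}^\transp \Pdiag^{-1}(t)\tilde{\theta}$ for a small $\mu>0$. Under \cref{assum:required_assumption}(i), standard RLS arguments give uniform positive-definite upper and lower bounds on each $P_j^{-1}(t)$, so $V$ is a valid Lyapunov function. Differentiating with $\tfrac{d}{dt}P_j^{-1} = -\alpha_j P_j^{-1} + \alpha_j\Psi_j\Psi_j^\transp$ and absorbing the cross term $-2\eta^\transp\Psi^\transp\Gamma\tilde{\theta}$ via Young's inequality with weight tuned to $\gamma_0$ produces
\[
\dot V \le -\underline{\alpha}\,\tilde{\theta}^\transp\Pdiag^{-1}\tilde{\theta} + \tilde{\theta}^\transp\bigl(AD + \gamma_0\Psi\Psi^\transp - \Gamma\Psi\Psi^\transp - \Psi\Psi^\transp\Gamma\bigr)\tilde{\theta} - c_\eta|\eta|^2 + c_\varepsilon|\varepsilon(t)|^2,
\]
in which the matrix sandwiched by $\tilde{\theta}$ is exactly the integrand of \cref{assum:required_assumption}(ii). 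Since that indefinite term is not pointwise non-positive, I would follow the relaxed Lyapunov programme of \cite{aeyels_exponential_1999,jouffroy_relaxed_2003} and integrate over $[t,t+T]$: combining \cref{assum:required_assumption}(ii) with the PE-induced lower bound on $\Pdiag^{-1}(t+T)$ yields a strict geometric decrement $V(t+T)\le\rho V(t)+\delta(t)$ with $\rho<1$ and exponentially small $\delta(t)$, from which exponential convergence of $\col(\yest,\west,\thetaest)$ follows by standard discrete-time arguments.

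The main obstacle I anticipate is the bookkeeping in (iii): choosing $\mu$ and the Young-inequality weight so that the ``bad'' matrix emerging in $\dot V$ matches the integrand in \cref{assum:required_assumption}(ii) exactly---the $+\gamma_0\Psi\Psi^\transp$ term appearing only after cross-term absorption---while preserving a strictly negative $|\eta|^2$ contribution. The factor $e^{-2\alpha_j T}$ on the right-hand side of the assumption further suggests that two successive PE-type bounds on $\Pdiag^{-1}$ across the averaging window must be chained to convert the averaged indefinite integral into a genuine decrement of $V$; this is the delicate constant-tracking step on which the whole argument turns.
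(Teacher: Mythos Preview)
Your proposal is essentially the paper's proof, repackaged from contraction/virtual-system language into an error-dynamics/Lyapunov argument. The change of variable $\eta=(\ytru-\yest)-\Psi^\transp\tilde\theta$ is exactly the first row of the paper's coordinate transformation $\Theta_0$, and your candidate $V=\mu|\eta|^2+\tilde\theta^\transp\Pdiag^{-1}\tilde\theta$ is, for $\mu=1$, precisely the squared norm $\|\delta z\|^2$ the paper obtains on the $(r,\eta)$ block after that transformation. Your cascade treatment of $\west$ (exponentially decaying disturbance $\varepsilon(t)$) replaces the paper's inclusion of $s$ in the virtual system with a large scaling $\mu\bar\Theta$; both devices serve only to dominate the bounded cross-Jacobian $J_{2,1}$ and are interchangeable. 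Both proofs then appeal to the relaxed exponential-stability framework of \cite{aeyels_exponential_1999,jouffroy_relaxed_2003}.

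Two points need tightening. First, $\mu$ should be $1$, not small. With $\mu=1$ the combined cross term collapses to $-2\gamma_0\,\eta^\transp\Psi^\transp\tilde\theta$, and completing the square (your Young step with unit weight) yields exactly $+\gamma_0\,\tilde\theta^\transp\Psi\Psi^\transp\tilde\theta$ while leaving $-\gamma_0|\eta|^2$; this is how the $+\gamma_0\Psi\Psi^\transp$ in \cref{assum:required_assumption}(ii) arises, and it is the same mechanism as the paper's Schur-complement argument showing $M_{2,2}\preceq 0$. For $\mu\neq 1$ the residual cross term carries $(\mu-1)\Gamma$ and cannot be absorbed into that specific integrand. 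Second, ``integrate $\dot V$ over $[t,t+T]$'' is not by itself sufficient: $\tilde\theta(\tau)$ is time-varying, so $\int_t^{t+T}\tilde\theta^\transp Q\,\tilde\theta\,d\tau$ is not controlled by $\int_t^{t+T}\lambda_{\max}(Q)\,d\tau$ directly. The paper (and \cite{jouffroy_relaxed_2003}) instead passes to the pointwise scalar inequality $\dot V\le(\lambda_{\max}(N(t))+\varepsilon)V$ and applies Gr\"onwall, after which \cref{assum:required_assumption}(ii) together with the PE lower bound \eqref{eq:lower_bound_R} on $\Pdiag^{-1}$ bounds $\int_t^{t+T}\lambda_{\max}(N)\,d\tau$ by a strictly negative constant. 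Your sketch should make that reduction explicit; once it does, the argument coincides with the paper's.
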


\begin{proof}
See \cref{sec:proof_convergence}.
\end{proof}

\begin{rmk}
	\cref{assum:required_assumption} part (\textit{ii}) shows that to promote consistent parameter estimation, we can pick $\Gamma = \gamma_0 I$ with $\gamma_0 \gg \alpha_j$ for all $j$: then $AD+ \gamma_0 \Psi\Psi^\transp - \Gamma \Psi \Psi^\transp - \Psi \Psi^\transp \Gamma \approx -\gamma_0 \Psi\Psi^\transp \preceq 0$. However, the freedom in choosing different $\gamma_j$ allows us to be more strategic in terms of how to choose those gains to ensure convergence of the observer, especially when some prior information about the system is available. This is illustrated in \cref{sec:numerical}.
\end{rmk}

\section{Application to biophysical neural networks}
\label{sec:CB_models}

In the context of biophysical neural network models, each component $v_i$ of the vector $v \in \setreal^\nv$ represents the membrane potential of a single neuronal cell. In each of these cells, the membrane potential evolves according to
\begin{equation}
	\label{eq:cb_network}
	c_i\dot{v}_i = -\sum_{\ion\in\mathcal{I}} I^\ion_{i} - \sum_{\syn\in\mathcal{S}} \sum_{k \neq i} I^\syn_{i,k} -\mu^\Leak_i(v_i-\nernst^\Leak) + u_i,
\end{equation}
where $c_i>0$ is a capacitance,
\begin{equation}
	\label{eq:ion_currents} 
	I^\ion_{i} = \mu^\ion_{i} (m^\ion_{i})^{p^\ion} (h^\ion_{i})^{q^\ion} (v_i - \nernst^\ion)
\end{equation}
are \textit{intrinsic ionic currents}, 
\begin{equation}
	\label{eq:syn_currents} 
	I^\syn_{i,k} = \mu^\syn_{i,k} s^\syn_{i,k} (v_i - \nernst^\syn)
\end{equation}
are \textit{synaptic currents}, and 
$\mu^\Leak_i(v_i-\nernst^\Leak)$ is a leak current. The set $\mathcal{I}$ collects ionic current types, while $\mathcal{S}$ collects synaptic current types. The scalars $\mu^\ion_{i}>0$ and $\mu^\syn_{i,k} > 0$ are \textit{intrinsic} and  \textit{synaptic maximal conductances}, and 
%In neurobiology, adaptation of $\mu^\ion_{i}$ is often associated to \textit{homeostasis}, while adaptation of $\mu^\syn_{i,k}$ is associated to \textit{learning}. 
the scalars $\nernst^\ion \in \setreal$ and $\nernst^\syn \in \setreal$ are \textit{intrinsic} and \textit{synaptic Nernst potentials}, respectively. 
%which dictate whether the associated intrinsic currents provide negative or positive feedback to the membrane voltage. 
%The \textit{synaptic Nernst potentials} $\nernst^\syn \in \setreal$ dictate if the $k^\text{th}$ neuron \textit{inhibits} or \textit{excites} the $i^{\text{th}}$ neuron in the network. 
Finally, the scalars $m^\ion_{i},h^\ion_{i}\in(0,1)$ and $s^\syn_{i,k}\in(0,1)$ are \textit{intrinsic} and \textit{synaptic gating variables}, respectively. Those gating variables modulate the intensity of the currents traversing the neuronal membrane, according to the voltage-dependent dynamics 
\begin{subequations}
	\label{eq:cb_internal} 
	\begin{align}	
		\label{eq:m_dyn}
		\tau_m^\ion(v_i)\,\dot{m}^\ion_{i} &= -m^\ion_{i} + \sigma^\ion_m(v_i) \\
		\label{eq:h_dyn}
		\tau_h^\ion(v_i)\,\dot{h}^\ion_{i} &= -h^\ion_{i} + \sigma^\ion_h(v_i) \\
		\label{eq:s_dyn} 
		\dot{s}^\syn_{i,k} &= a_\syn\sigma^\syn(v_k)(1-s^\syn_{i,k}) - b_\syn s^\syn_{i,k}		
	\end{align}
\end{subequations}
where $\tau(\cdot)$ are bell-shaped\footnote{The results easily extend to other forms of time-constant functions and sigmoidal functions.} functions of the form $\tau(v)=\taumin + (\taumax-\taumin)\exp(-(v-\mean)^2/\std^2)$, with $\taumin,\taumax>0$, where $\sigma(\cdot)$ are sigmoids of the form $\sigma(v) = (1+\exp\left(-(v-\vhalf)/\slope\right))^{-1}$, and where $a_\syn,b_\syn > 0$.

For $i,k = 1,\dotsc,\nv$, the biophysical neural network model given by \eqref{eq:cb_network}-\eqref{eq:cb_internal} can be put in the parametric form \eqref{eq:system}. We illustrate this fact by means of an example in which the components of the parameter vector $\theta$ are given by the maximal conductances $\mu_i^\ion$ and $\mu_{i,k}^\syn$, which constitute key parameters dictating the behavior of the network (see \cite{burghi_adaptive_2021} for more general parametrizations).
\begin{ex}
	\label{ex:HH_network} 
	The Hodgkin-Huxley (HH) biophysical model first introduced by \cite{hodgkin_quantitative_1952} contains a sodium and a potassium intrinsic current, so that $\mathcal{I} = \{\Na,\K\}$. Here we consider two HH neurons interconnected bidirectionally by means of a GABA-type inhibitory synapse (abbreviated by $\GABA$), so that $\mathcal{S} = \{\GABA\}$. We parameterize the model according to
	\[
		\theta = \col(\theta^\Na,\theta^\K,\theta^\GABA)\;,
	\]
	where 
	\[\theta^\Na = (\mu^\Na_1,\mu^\Na_2)^\transp,\;\theta^\K = (\mu^\K_1,\mu^\K_2)^\transp,\;\theta^\GABA = (\mu^\GABA_{1,2},\mu^\GABA_{2,1})^\transp \;.\]
	The voltage dynamics \eqref{eq:cb_network} of this model can be expressed as
	\begin{equation*}
%	\label{eq:HH_network_voltage}
	\dot{v} = \Phi_\Na(v,w^\Na)\theta^\Na
	+ \Phi_\K(v,w^\K)\theta^\K
	+ \Phi_\GABA(v,w^\GABA)\theta^\GABA
	+ \bv(v,u)
	\end{equation*}
	with the internal states
	\begin{align*}
	w^\Na &= (m^\Na_1,h^\Na_1,m^\Na_2,h^\Na_2)^\transp,
	w^\K = (m^\K_1,m^\K_2)^\transp, \text{ and } \\
	w^\GABA &= (s^\GABA_{1,2},s^\GABA_{2,1})^\transp,
	\end{align*}		
	the system matrices
	\begin{equation*}
%	\label{eq:HH_matrices} 
	\resizebox{\hsize}{!}{$
	\begin{aligned}
	\Phi_\Na(v,w^\Na) &= 
	-\diag\left ( (w^\Na_1)^3w^\Na_2(v_1-\nernst^\Na)\,,\, -(w^\Na_3)^3w^\Na_4(v_2-\nernst^\Na) \right),
\\
	\Phi_\K(v,w^\K) &= 
	-\diag\left(
	(w^\K_1)^4(v_1-\nernst^\K)\,,\,
	(w^\K_2)^4(v_2-\nernst^\K)
	\right),
	\\
	\Phi_\GABA(v,w^\GABA) &= 
	-\diag\left(
		w^\GABA_1(v_1-\nernst^\GABA)\,,\,
		w^\GABA_2(v_2-\nernst^\GABA)
	\right),
	\end{aligned}
		$}
	\end{equation*}
	and the known vector
	\begin{equation*}
		\bv(v,u) = 
	\begin{pmatrix}
		-\mu_1^\Leak(v_1-\nernst^\Leak)
		+ u_1 \\
		-\mu_2^\Leak(v_2-\nernst^\Leak)
		+ u_2
	\end{pmatrix}.
	\end{equation*}
	With the choice of internal states above, the internal dynamics \eqref{eq:cb_internal} can be expressed as \eqref{eq:dw_true}, and hence the model is in the form \eqref{eq:system} (for clarity, we have replaced indexing by $j = 1,\dotsc,m$ with indexing by $\ion \in \mathcal{I}$ and $\syn \in \mathcal{S}$).
\end{ex}

It can be shown that any conductance-based model \eqref{eq:cb_network}-\eqref{eq:cb_internal} satisfies \cref{assum:true_invariant_set,assum:int_dyn_contraction}; in particular the internal dynamics \eqref{eq:cb_internal} are exponentially contracting (see \cite{burghi_adaptive_2021} for rigorous proofs). Hence,  \eqref{eq:adaptive_observer_dist} can be used to estimate the maximal conductances of any conductance-based model.

\cref{ex:HH_network} shows an interesting feature of biophysical neural network models: the matrices $\Phi_j(v,\cdot)$ are diagonal, with the $i^\mathrm{th}$ diagonal elements depending only on the voltage and internal states of the $i^\mathrm{th}$ neuron. This feature allows for distributing parameter estimation over the neurons in the network, in addition to distributing it over individual membrane currents. This is illustrated next:

\begin{ex}
	\label{ex:network_distribution} 
	Consider the two-neuron HH network of \cref{ex:HH_network}. With the distributed observer \eqref{eq:adaptive_observer_dist}, the sodium maximum conductance update rule is given by
	\begin{equation}
		\label{eq:sodium_update_rule} 
		\begin{split}
		\dot{\hat{\theta}}^\Na &= \gamma_\Na P_\Na \Psi_\Na(v-\hat{v}) \\
		\dot{\Psi}_\Na &= -\gamma_\Na \Psi_\Na + \Phi_\Na(v,\hat{w}^\Na) \\
		\dot{P}_\Na &= \alpha_\Na P_\Na
		- \alpha_\Na P_\Na \Psi_\Na \Psi_\Na^\transp P_\Na
		\end{split}	
	\end{equation}
	For $\Psi_\Na(0)=0$ and diagonal $P_\Na(0)\succ 0$, the matrices $\Psi_\Na(t)$, and $P_\Na(t)$ are diagonal for all $t \ge 0$, and \eqref{eq:sodium_update_rule} becomes the network-distributed update rule
	\begin{equation*}
		\begin{split}
		\dot{\hat{\mu}}_i^\Na 
		&= \gamma_\Na p^\Na_i \psi^\Na_i (v_i-\hat{v}_i) \\
		\dot{\psi}^\Na_i &= -\gamma_\Na \psi^\Na_i - (\hat{m}_i^\Na)^3\hat{h}_i^\Na(v_i-\nernst^\Na) \\
		\dot{p}_i^\Na &= \alpha_\Na p_i^\Na -
		\alpha_\Na (p_i^\Na)^2(\psi_i^\Na)^2
		\end{split}
	\end{equation*}
	for $i=1,2$. The same simplification applies to the update rules of $\hat{\theta}^\K$ and $\hat{\theta}^\GABA$. Hence the observer \eqref{eq:adaptive_observer_dist} becomes fully distributed with respect to neurons in the network and individual neuronal membrane currents in each neuron.
\end{ex}

\subsection{Numerical simulation}
\label{sec:numerical}

We finish this section by simulating online parameter estimation of the two-neuron HH network of \cref{ex:HH_network}. By doing so, we also show that the distributed adaptive observer estimates are able to approximately track slowly time-varying parameters. We consider a configuration in which the two neurons in the true system have identical parameters, except for their synaptic maximal conductances and control inputs. The parameters to be estimated are given by $\mu_1^\Na = \mu_2^\Na = 120$, $\mu_1^\K = \mu_2^\K = 36$,
\begin{align*}
	\mu_{1,2}^\GABA &= 0.75 - 0.4(1+e^{-(t-750)/100})^{-1},
	\quad \text{and}
	 \\
	\mu_{2,1}^\GABA &= 0.25 + 0.4(1+e^{-(t-750)/100})^{-1},
\end{align*}
while the remaining (non-adaptive) parameters are described in \cref{app:parameters} (which also describes initial conditions). Figure \ref{fig:voltages} shows the resulting voltages of the network.

\begin{figure}[t]
	\centering
	\if\usetikz1
	\tikzset{png export}
	\begin{tikzpicture}
		\begin{groupplot}[
			group style={group size=1 by 2,
			vertical sep=10pt,
%			xlabels at=edge bottom,
%			x descriptions at=edge bottom
			},
			height=2.5cm,
			width=8.0cm, 
			filter discard warning=false,
			axis y line = left,
			axis x line = bottom,	
			tick label style={font=\scriptsize},	
			label style={font=\scriptsize},
			legend style={font=\footnotesize},
%			xmin=0,
%			xmax=20010
%			ymin=0,ymax=7.5,
			xtick={0,400,800,1200},
			]				
		% v1
			\nextgroupplot[
				ylabel={$v_1 \; \mathrm{[mV]}$},
				,xticklabels={,,}
				]
			\addplot[color=blue,semithick,smooth]
				table[x index=0,y index=1] 
				{./data/voltages_distr=false_samegains=true.txt};
		% v2
			\nextgroupplot[
				ylabel={$v_2 \; \mathrm{[mV]}$},
				xlabel={$t$ [ms]}, 
				]				 
			\addplot [color=red,semithick,smooth]
				table[x index=0,y index=2] 
				{./data/voltages_distr=false_samegains=true.txt};
		\end{groupplot}
	\end{tikzpicture}
	\else
		\includegraphics[scale=0.125]{./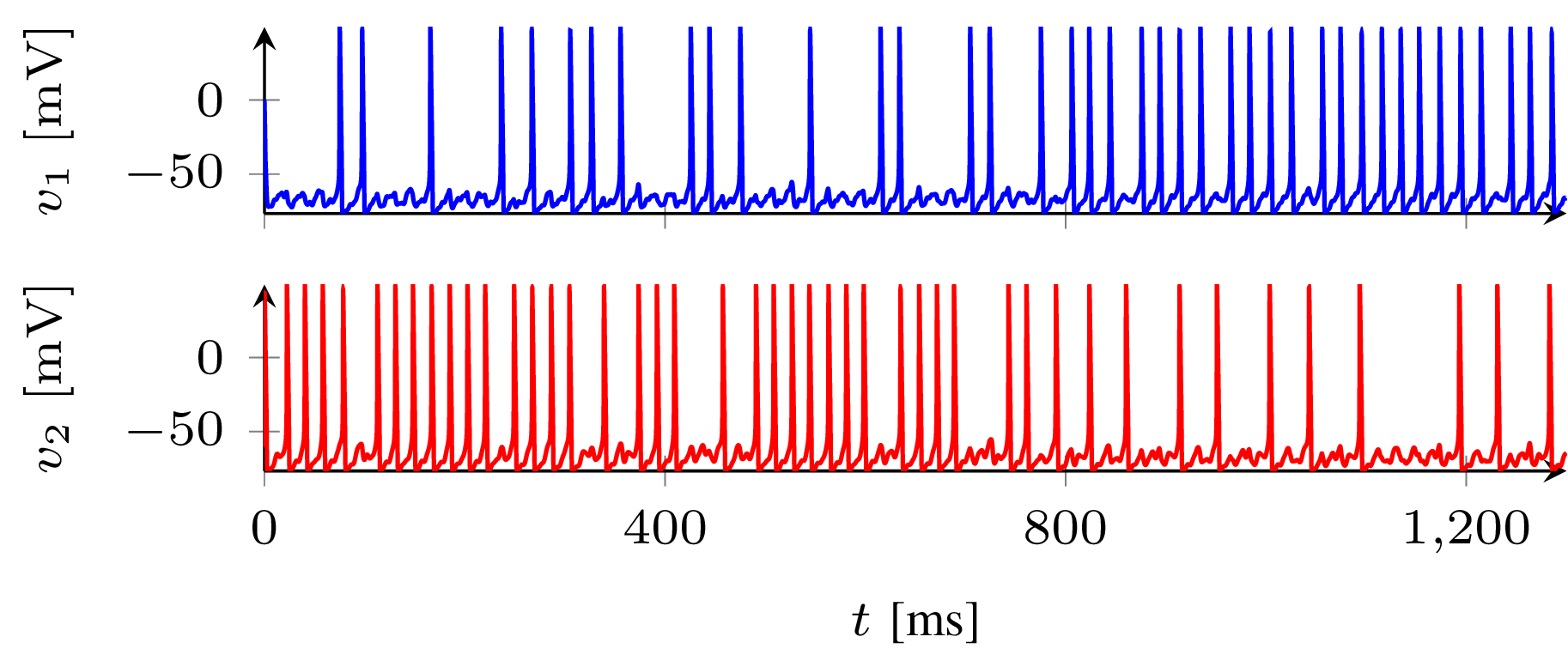}
	\fi
	\caption{Voltages of the two-neuron HH network in \cref{sec:numerical}.}
	\label{fig:voltages}
\end{figure}

We compare the performance of the distributed observer  \eqref{eq:adaptive_observer_dist}, implemented according to \cref{ex:network_distribution}, with the non-distributed version \eqref{eq:adaptive_observer}. To check the robustness of the observers to measurement noise, both observers are simulated with the measured $v(t)$ replaced by $v(t)+e(t)$, where $e(t)$ is white Gaussian measurement noise chosen so that the signal-to-noise between $v(t)$ and $e(t)$ is of $40$ dB.

The non-distributed observer is simulated with $\alpha = 0.15$ and $\gamma = 2$. The distributed observer is simulated with the two different gain sets below:
\begin{table}[h]
%	\caption{Parameters of the activation functions 
%	\eqref{eq:sigmoid} and time-constant functions
%	\eqref{eq:time-constant} of the HH model used in the
%	paper.}
	\normalsize
	\centering
	\begin{tabular}{|p{.7cm}|p{.5cm}|p{.5cm}|p{.5cm}|p{.5cm}|p{.6cm}|p{.6cm}|p{.6cm}|}
	\hline
	Gains &$\gamma_0$ & $\gamma_\Na$ & $\gamma_\K$ & $\gamma_\GABA$ & 
	$\alpha_\Na$ & $\alpha_\K$ & $\alpha_\GABA$
	\\
	\hline
	(A) & $2$ & $2$ & $2$ & $2$ & $0.15$ & $0.15$ & $0.15$
	\\
	\hline			
	(B) & $2$ & $2$ & $2$ & $0.8$ & $0.15$ & $0.15$ & $0.03$
	\\
	\hline		
	\end{tabular}
%	\label{tab:HH_internal_parameters} 
\end{table}

The behaviour of the estimates $\hat{\mu}^\Na_i$ and $\hat{\mu}^\GABA_{i,j}$ is shown in Figures \ref{fig:gNa} and \ref{fig:gSyn} (the behaviour of $\hat{\mu}^\K_i$ is qualitatively similar to that of $\hat{\mu}^\Na_i$ and is omitted). The estimates of the non-distributed observer converge very rapidly to a neighborhood of the true parameter values (after only a few spikes, compare with Figure \ref{fig:voltages}). This indicates a rather aggressive choice of gains, which can also be inferred from the perturbation in $\hat{\mu}^\GABA_{1,2}$ seen just before $1200$ ms (this perturbation occurs due to a momentary decrease in the excitation provided by $v_2$, which increases sensitivity to noise). For the distributed observer with gain set (A), which mimics the aggressive gains of the non-distributed observer, it can be seen that $\hat{\mu}^\Na_i$ converges with a small smooth transient, but $\hat{\mu}^\GABA_{i,j}$ converges with a large transient with rapid oscillations. A less aggressive choice of gains could mitigate the undesired transients and decrease the sensitivity to noise, at the cost of a slower convergence rate (for a detailed analysis, see \cite{burghi_adaptive_2021}). The distributed observer allows being strategic with respect to the choice of gains: the gain set (B) remedies the undesirable transients in $\hat{\mu}^\GABA_{i,j}$ with a less aggressive choice of gains for $\alpha_\GABA$ and $\gamma_\GABA$, while keeping the other gains from the set (A). As a result, the good convergence properties of $\hat{\mu}_i^\Na$ (and of $\hat{\mu}_i^\K$) are preserved.

\begin{figure}[b]
	\centering
	\if\usetikz1
	\tikzset{png export}
	\begin{tikzpicture}
		\begin{groupplot}[
			group style={group size=1 by 3,
			vertical sep=10pt,
%			xlabels at=edge bottom,
%			x descriptions at=edge bottom
			},
			height=3.50cm,
			width=8.0cm, 
			axis y line = left,
			axis x line = bottom,	
			tick label style={font=\scriptsize},	
			label style={font=\scriptsize},
			legend style={font=\footnotesize},
%			xmin=0,
%			xmax=20010
			ymin=40,ymax=160,
			xtick={0,400,800,1200},
			legend pos=south east,
%			legend style={at={(0.5,1.0)},
%				anchor=south},
			legend columns = -1,	
			]
		% v1
			\nextgroupplot[ylabel=Non-distributed,xticklabels={,,}]
			\addplot[color=blue,semithick,smooth]
				table[x index=0,y index=1] 
				{./data/parameters_distr=false_samegains=true.txt};
			\addlegendentry{$\hat{\mu}^\Na_1$};
			\addplot[color=red,semithick,smooth]
				table[x index=0,y index=2] 
				{./data/parameters_distr=false_samegains=true.txt};
			\addlegendentry{$\hat{\mu}^\Na_2$};
			\addplot[dashed,black,domain=0:1300,semithick]{120};
			
			\nextgroupplot[ylabel=Distributed (A),xticklabels={,,}]
			\addplot[color=blue,semithick,smooth]
				table[x index=0,y index=1] 
				{./data/parameters_distr=true_samegains=true.txt};
			\addlegendentry{$\hat{\mu}^\Na_1$};
			\addplot[color=red,semithick,smooth]
				table[x index=0,y index=2] 
				{./data/parameters_distr=true_samegains=true.txt};
			\addlegendentry{$\hat{\mu}^\Na_2$};
			\addplot[dashed,black,domain=0:1300,semithick]{120};
			
			\nextgroupplot[ylabel=Distributed (B),
							xlabel={$t$ [ms]},
							]
			\addplot[color=blue,semithick,smooth]
				table[x index=0,y index=1] 
				{./data/parameters_distr=true_samegains=false.txt};
			\addlegendentry{$\hat{\mu}^\Na_1$};
			\addplot[color=red,semithick,smooth]
				table[x index=0,y index=2] 
				{./data/parameters_distr=true_samegains=false.txt};
			\addlegendentry{$\hat{\mu}^\Na_2$};
			\addplot[dashed,black,domain=0:1300,semithick]{120};
			
		\end{groupplot}
	\end{tikzpicture}
	\else
		\includegraphics[scale=0.125]{./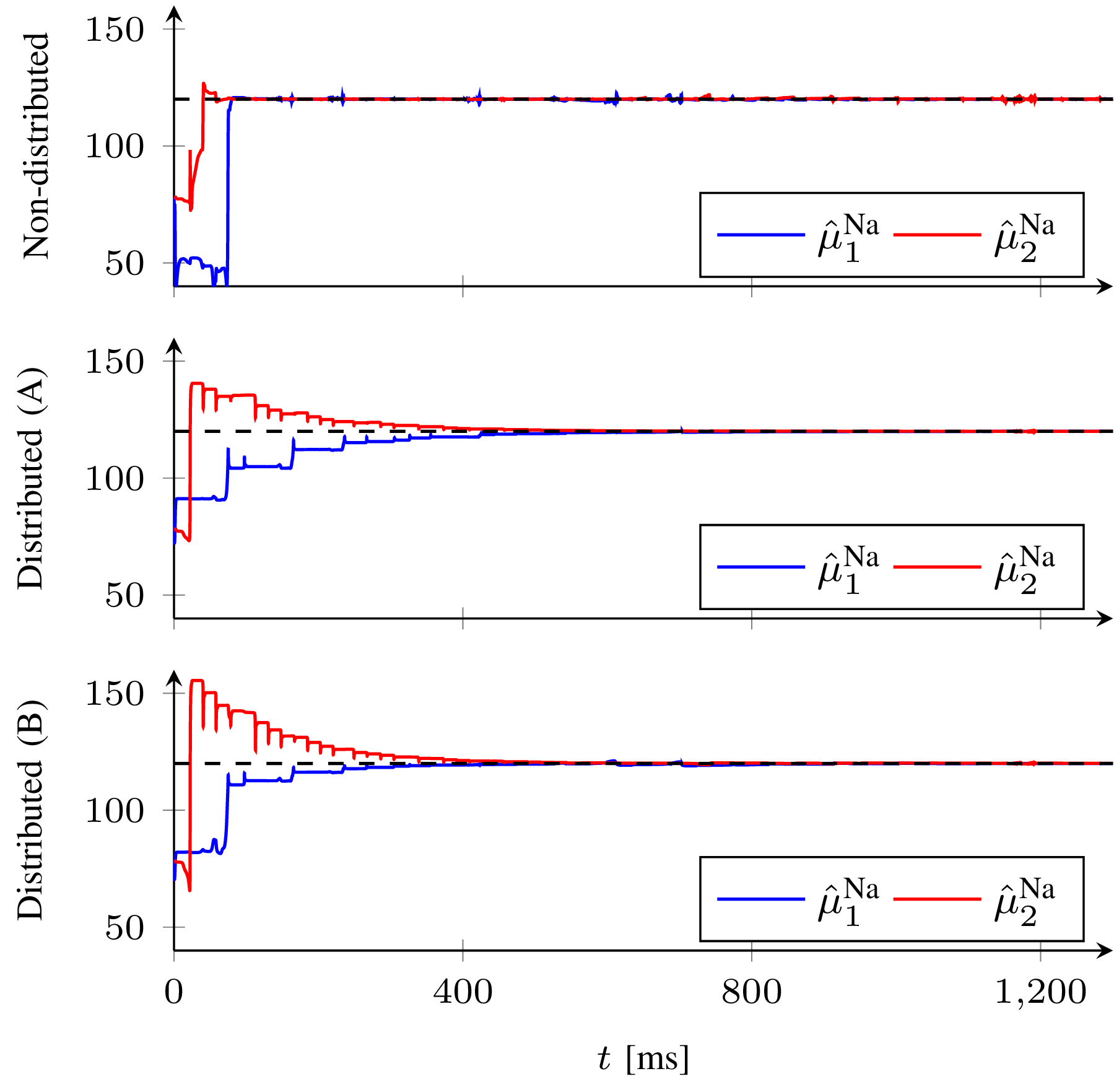}
	\fi
	\caption{Estimates of $\mu^\Na_1=\mu^\Na_2$ (dashed line) for the non-distributed and distributed adaptive observers with gain sets (A) and (B).}
	\label{fig:gNa}
\end{figure}

\begin{figure}[b]
	\centering
	\if\usetikz1
	\tikzset{png export}
	\begin{tikzpicture}
		\begin{groupplot}[
			group style={group size=1 by 3,
			vertical sep=10pt,
%			xlabels at=edge bottom,
%			x descriptions at=edge bottom
			},
			height=3.50cm,
			width=8.0cm, 
			axis y line = left,
			axis x line = bottom,	
			tick label style={font=\scriptsize},	
			label style={font=\scriptsize},
			legend style={font=\footnotesize},
%			xmin=0,
%			xmax=20010
			ymin=-0.1,ymax=1.1,
			xtick={0,400,800,1200},
			legend pos=north east,
%			legend style={at={(0.5,1.0)},
%				anchor=south},
			legend columns = -1,	
			]
		% v1
			\nextgroupplot[ylabel=Non-distributed,xticklabels={,,}]
			\addplot[color=blue,semithick,smooth]
				table[x index=0,y index=5] 
				{./data/parameters_distr=false_samegains=true.txt};
			\addlegendentry{$\hat{\mu}^\GABA_{1,2}$};
			\addplot[color=red,semithick,smooth]
				table[x index=0,y index=6] 
				{./data/parameters_distr=false_samegains=true.txt};
			\addlegendentry{$\hat{\mu}^\GABA_{2,1}$};
			\addplot[dashed,black,domain=0:1300,semithick]{0.75 - 0.4/(1+exp(-(x-750)/100)};
			\addplot[dashed,black,domain=0:1300,semithick]{0.25 + 0.4/(1+exp(-(x-750)/100)};
			
			\nextgroupplot[ylabel=Distributed (A),xticklabels={,,}]
			\addplot[color=blue,semithick,smooth]
				table[x index=0,y index=5] 
				{./data/parameters_distr=true_samegains=true.txt};
			\addplot[color=red,semithick,smooth]
				table[x index=0,y index=6] 
				{./data/parameters_distr=true_samegains=true.txt};
			\addplot[dashed,black,domain=0:1300,semithick]{0.75 - 0.4/(1+exp(-(x-750)/100)};
			\addplot[dashed,black,domain=0:1300,semithick]{0.25 + 0.4/(1+exp(-(x-750)/100)};
			
			\nextgroupplot[ylabel=Distributed (B),
							xlabel={$t$ [ms]},
							]
			\addplot[color=blue,semithick,smooth]
				table[x index=0,y index=5] 
				{./data/parameters_distr=true_samegains=false.txt};
			\addplot[color=red,semithick,smooth]
				table[x index=0,y index=6] 
				{./data/parameters_distr=true_samegains=false.txt};
			\addplot[dashed,black,domain=0:1300,semithick]{0.75 - 0.4/(1+exp(-(x-750)/100)};
			\addplot[dashed,black,domain=0:1300,semithick]{0.25 + 0.4/(1+exp(-(x-750)/100)};
			
		\end{groupplot}
	\end{tikzpicture}
	\else
		\includegraphics[scale=0.125]{./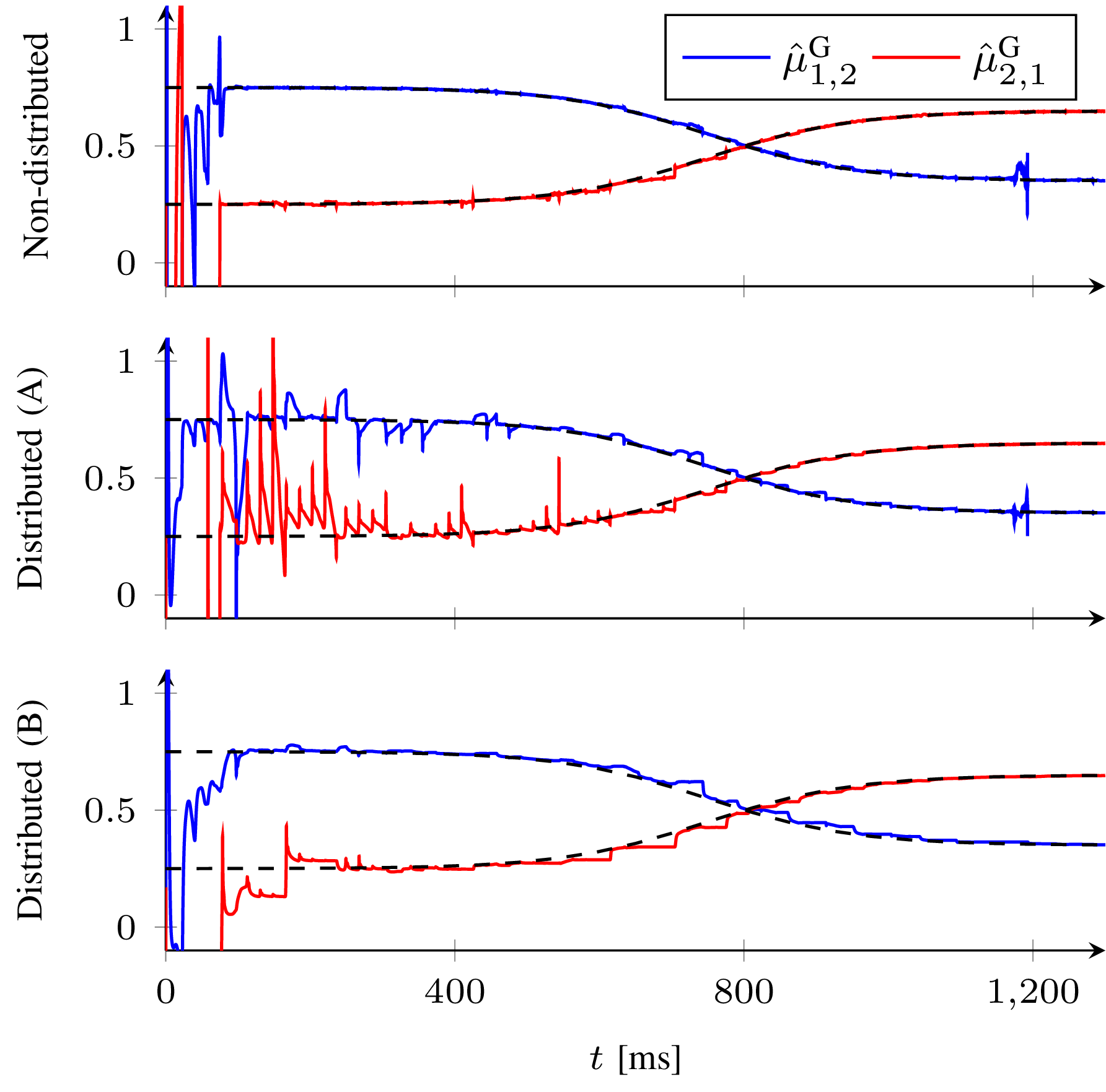}
	\fi
	\caption{Estimates of $\mu^\GABA_{1,2}$ and $\mu^\GABA_{2,1}$ (dashed lines) for the non-distributed and distributed adaptive observers with gain sets (A) and (B).}
	\label{fig:gSyn}
\end{figure}

\section{Discussion}
\label{sec:discussion}

We have shown that the distributed adaptive observer proposed in this paper is well-suited for estimating networks of biophysical neuronal models in real-time. We envision two applications of this work. First, \textit{experiment design in electrophysiology}: neuromodulators are capable of changing the maximal conductances of living neurons, and this process can be studied \textit{in vitro} using a closed-loop technique called dynamic clamp \cite{sharp_dynamic_1993}. The adaptive observer provides a means to track neuromodulatory changes in real-time, which allows incorporating this information in experiment design. The second application is the \textit{detection of qualitative changes in excitable regimes} \cite{tang_-line_2019}: epileptic seizures are associated to hyperexcitable neuronal states; given intracellular voltage data from a large network of neurons, the distributed adaptive observer provides a tractable means to keep track of the conductances responsible for modulating cellular excitability. This information is valuable for predicting the onset of switches in excitable regime associated to seizures.

In this paper, we have assumed the internal dynamics \eqref{eq:dw_true} is 
%not estimated. Although we have preliminary results showing that the observer can be extended to treat the case of a linearly parameterized internal dynamics, this would not cover the case in which a parametrization for the internal dynamics is unknown, which often happens when one encounters an unknown neuron type. One option in this case is to give up on a detailed description of the internal dynamics (and on obtaining consistent estimates) and appeal to flexible deep learning models, something that was previously proposed 
known. To address the case of an uncertain or unknown internal dynamics, a distributed version of the locally convergent nonlinearly parameterized adaptive observer in \cite{burghi_adaptive_2021}, designed to estimate the internal dynamics, can be proposed. This will be the subject of future work.

\appendix

\subsection{Proof of \cref{thm:convergence}}
\label{sec:proof_convergence}

The proof is inspired by the results of
%aeyels_new_1998,
\cite{aeyels_exponential_1999,jouffroy_relaxed_2003}  who provided sufficient conditions for exponential stability of nonlinear systems without requiring a Lyapunov function with a negative semidefinite derivative.
The proof also uses the idea of \textit{virtual system} from contraction theory, see \cite{lohmiller_contraction_1998}. The idea is to construct a dynamical system (the virtual system) whose trajectories contain the trajectories of the true system \eqref{eq:system} as well as those of the system \eqref{eq:dv_dist}-\eqref{eq:dtheta_dist}, and then show that the virtual system is exponentially contracting on a positively invariant set. Contraction of the virtual system's trajectories then imply exponential convergence of $(\vest,\west,\thetaest)$ to $(v,w,\theta)$.
%This allows us to write \eqref{eq:dPsi_dist}-\eqref{eq:dP_dist} as 
%\begin{equation}
%	\label{eq:adaptive_matrices_metric}
%	\begin{aligned}
%	\dot{\Psi} &= -\Gamma \Psi  + \Phi(v,\hat{w},u), &\quad
%	\Psi(0) = 0 \\
%	\dot{\Pdiag} &= \Pdiag A + A \Pdiag -  \Pdiag \diag\{\gamma_j\Psi_j \Psi_j^\transp\} \Pdiag , &\quad \Pdiag(0) \succ 0
%	\end{aligned}
%\end{equation}
The virtual system is given by
%\begin{subequations}
\begin{equation}
	\label{eq:virtual_system} 
%	\begin{align}
\begin{split}
%		\label{eq:ds}
		\dot{s} &= \fw(\ytru,s) 		
		\\
%		\label{eq:dr}
		\dot{r} &= 
		\fvvirt(t,s,\eta) + \bv(\ytru,s,\utru) +
		(\gamma_0 I + \Psi^\transp \Pdiag \Gamma \Psi)(\ytru-r)  
		\\
%		\label{eq:deta}
		\dot{\eta} &= \Pdiag \Gamma \Psi (\ytru - r)
		\end{split}
%	\end{align}
\end{equation}
%\end{subequations}
where
\[
	\fvvirt(t,s,\eta) = 
%	\Phi^\transp(\ytru,s,\utru)\thetatru + 
	\Phi^\transp(\ytru,\west,\utru)\eta +
	(\Phi^\transp(\ytru,s,\utru) - \Phi^\transp(\ytru,\west,\utru))\thetatru
%	\gobs(\ytru,s,\thetatru,\utru) + \gobs(\ytru,\west,\eta,\utru)
%						- \gobs(\ytru,\west,\thetatru,\utru)
\]
and where $\Psi$ comes from \eqref{eq:dPsi_dist}, which can be rewritten as
\begin{equation*}
	\dot{\Psi} = -\Gamma \Psi  + \Phi(v,\hat{w},u), \quad
	\Psi(0) = 0 \;.
\end{equation*}

Here, $s$, $r$ and $\eta$ are the virtual system states, as opposed to $\ytru(t)$,  $\west(t)$, and $\utru(t)$, which are treated as time-varying signals.
Notice that any solutions $\col(\wtru,\vtru,\thetatru)$ of \eqref{eq:system} and $\col(\west,\vest,\thetaest)$ of \eqref{eq:adaptive_observer_dist} are particular solutions of the virtual system. 
%This can be verified by replacing $\col(s,r,\eta)$ in \eqref{eq:virtual_system} by $\col(\wtru,\vtru,\thetatru)$ and by $\col(\west,\vest,\thetaest)$. 
%In particular, we have $\fvvirt(t,\wtru,\thetatru) = \Phi^\transp(\ytru,\wtru,\utru)\thetatru$ and $\fvvirt(t,\west,\thetaest) = \Phi^\transp(\ytru,\west,\utru)\thetaest$.
The first two equations of the virtual system have been written in a different order with respect to those of the true system and the adaptive observer, to simplify the notation in the remainder of the proof.
%emphasize the virtual system structure: the dynamics \eqref{eq:dr}-\eqref{eq:deta} depends on $s$, but the dynamics \eqref{eq:ds} does not depend on $r$ and $\eta$.

%To show that the virtual system is exponentially contracting  on $W \times \setreal^{\nv} \times \setreal^{\ntheta}$, we will first show that the dynamics \eqref{eq:dr}-\eqref{eq:deta}
%%the subsystem
%%\eqref{eq:dr}, \eqref{eq:deta} 
%is globally exponentially contracting, uniformly in $s$ on $\setreal^{\nw}$. We then show contraction of the whole virtual system \eqref{eq:virtual_system}, which becomes simple as the dynamics \eqref{eq:ds} are exponentially contracting by \cref{assum:int_dyn_contraction}. 

To proceed, we derive the \textit{differential} virtual system, given by
\begin{equation}
	\label{eq:diff_virtual_system} 
	\begin{pmatrix}
	\dot{\delta s} \\ \dot{\delta r} \\ 
	\dot{\delta \eta}
	\end{pmatrix}
	=
	\underbrace{
	\begin{bmatrix}
	J_{1,1} 
	& 0 \\ J_{2,1} & J_{2,2}
	\end{bmatrix}
	}_{\displaystyle J}
	\begin{pmatrix}
	\delta s \\ \delta r \\ \delta \eta
	\end{pmatrix}
\end{equation}
where the Jacobian $J$ has the components
\begin{subequations}
	\label{eq:Jvirt} 
	\begin{align}
	J_{1,1} &= \diag\{\partial_{s^1}\fw_1(v,s^1),\dotsc,\partial_{s^m}\fw_m(v,s^m)\}
	\\[.5em]
	J_{2,1} &= 
	\begin{bmatrix}
		\partial_s (\Phi^\transp(\ytru,s,\utru)\theta+
		\bv(\ytru,s,\utru)) \\ 0
	\end{bmatrix},
	\\[.5em]
	\label{eq:J22} 
	J_{2,2} &= 
	\begin{bmatrix}
		- (\gamma_0 I+\Psi^\transp \Pdiag \Gamma \Psi) &
		\Phi^\transp(\ytru,\west,\utru) \\
		-		\Pdiag \Gamma \Psi & 0
	\end{bmatrix}
	\end{align}
\end{subequations}
Following the ideas of  \cite{lohmiller_contraction_1998,jouffroy_relaxed_2003} we show that the dynamics  
%$\col(r,\eta)$ are exponentially contracting, uniformly in $s$, 
\eqref{eq:virtual_system} are exponentially contracting on the invariant set $W \times \setreal^{\nv} \times \setreal^{\ntheta}$
using an infinitesimal coordinate transformation (we write $W:=\prod_j W_j$). We define
\begin{equation}
	\label{eq:coordinate_transformation} 
%	\begin{split}
%	\delta z_1 &= \mu \Theta_1(t) \delta s \\
%	\delta z_2 &= \Theta_2(t) 
%	\begin{pmatrix}
%		\delta r \\
%		\delta \eta
%	\end{pmatrix}	
%	\end{split}		
	\delta z = 
%	\begin{bmatrix}
%		\mu \bar{\Theta}(t) & 0 \\
%		0 & \Theta_0(t)
%	\end{bmatrix} 
	\col
	\left(
		\mu \bar{\Theta}(t) \delta s,\quad
		\Theta_0(t) \begin{pmatrix}
		\delta r \\
		\delta \eta
		\end{pmatrix}
	\right),
\end{equation}
where $\mu > 0$ is an arbitrary constant, $\bar{\Theta}$ is given by
\[
	\bar{\Theta}(t) := \diag\{\Theta_1,\dotsc,\Theta_m\},
\]
with the $\Theta_j(t)$ from \cref{assum:int_dyn_contraction}, and $\Theta_0(t)$ is given by 
\begin{equation*}
	\Theta_0(t) = 
	\begin{bmatrix}
		I & -\Psi^\transp \\
		0 & \Rdiag^{\frac{1}{2}}
	\end{bmatrix},
\end{equation*}
with 
\[ \Rdiag(t) = \diag\{R_1(t),\dotsc,R_m(t)\}:= \Pdiag(t)^{-1} .\]
Under \cref{assum:true_invariant_set,assum:int_dyn_contraction},
$\Psi(t)$ is bounded for all $t\ge 0$, which implies $\Rdiag(t)$ is a well-defined upper-bounded inverse of $\Pdiag(t)$, see \cite[Lemma 3]{burghi_adaptive_2021}. Furthermore, part \textit{(i)} of \cref{assum:required_assumption} ensures that
\begin{equation}
	\label{eq:lower_bound_R} 
	 R_j(t) \succeq \delta_j \alpha_j  e^{-2\alpha_j T} I, \quad t \ge T
\end{equation}
and $R_j(t) \succeq e^{-\alpha_j T} R(0) \succ 0$ for $t\in[0,T)$, see 
\cite[Lemma 1]{zhang_adaptive_2001}.
%\cite[Lemmas 1 and 2]{burghi_adaptive_2021} 
%\cite[Lemma 2]{burghi_adaptive_2021} 
As a consequence, $J$ in \eqref{eq:diff_virtual_system} is bounded on $W \times \setreal^{\nv} \times \setreal^{\ntheta}$ for all $t \ge 0$, and $\Theta_0(t)$ is uniformly invertible and bounded, with
\begin{equation}
	\label{eq:Thetainv} 
	\Theta_0^{-1}(t) = 
	\begin{bmatrix}
	I & \Psi^\transp \Pdiag^{\frac{1}{2}} \\
	0 & \Pdiag^{\frac{1}{2}}
	\end{bmatrix}
\end{equation}

Following \cite{jouffroy_relaxed_2003}, we prove contraction of the virtual system \eqref{eq:virtual_system} by showing that $\|\delta z(t)\| \to 0$ as $t \to 0$. For that purpose, we now derive the transformed differential system. Taking the derivative of  \eqref{eq:coordinate_transformation} and using the inverse coordinate transformation, we obtain
\begin{equation}
\label{eq:F} 
	\dot{\delta z}
	=
	\underbrace{
	\begin{bmatrix}
		\bar{F} & 0 \\
		\mu^{-1} J_{2,1} 
		\bar{\Theta}^{-1} & F_0
	\end{bmatrix}
	}_{\displaystyle F(t,s)}
%\begin{pmatrix}
%	\delta z_1 \\ \delta z_2
%\end{pmatrix}
	\delta z
\end{equation}
%\begin{pmatrix}
%	\dot{\delta z_1} \\ \dot{\delta z_2}
%\end{pmatrix}
where $\bar{F} = \diag\{F_1,\dotsc,F_m\}$, with $F_j$ given by \eqref{eq:Mint}, and where $F_0$ is given by
\begin{equation}
	\label{eq:gen_jacobian} 
	F_0 = \big(\dot{\Theta}_0+\Theta_0 J_{2,2}\big)\Theta_0^{-1}
\end{equation}
(here, we have used $\Theta_0 J_{2,1} = J_{2,1}$).

The dynamics of the distance $\|\delta z\|$ is governed by
\[
	\frac{d}{dt}\|\delta z(t)\|^2 
	= 
	\delta z^\transp (F+F^\transp) \delta z
\]
and hence we seek an upper bound for $F+F^\transp$. To compute $F_0 + F_0^\transp$, it is worth noticing that since $d(P_j^{-1})/dt = -P_j^{-1}\dot{P}_jP_j^{-1}$, the diagonal elements of $\Rdiag$ obey
\begin{equation*}
	\dot{R}_j = -\alpha_j R_j + \alpha_j \Psi_j\Psi_j^\transp
\end{equation*}
from where we derive the relation
\begin{equation}
\label{eq:relation_dR} 
	\tfrac{d}{dt}(\Rdiag^{\frac{1}{2}})\Pdiag^{\frac{1}{2}} + \Pdiag^{\frac{1}{2}} \tfrac{d}{dt}(\Rdiag^{\frac{1}{2}}) = -A + \Pdiag^{\frac{1}{2}} AD \Pdiag^{\frac{1}{2}}
\end{equation}
It follows from \eqref{eq:J22}, \eqref{eq:Thetainv}, \eqref{eq:gen_jacobian} and \eqref{eq:relation_dR} that
\[
\resizebox{\hsize}{!}{$
	F_0+F_0^\transp = 
	\begin{bmatrix}
	-2\gamma_0 I & -\gamma_0 \Psi^\transp \Pdiag^{\frac{1}{2}} \\
	-\gamma_0 \Pdiag^{\frac{1}{2}} \Psi &
	-A + \Pdiag^{\frac{1}{2}}(AD-\Gamma \Psi \Psi^\transp - \Psi \Psi^\transp \Gamma)\Pdiag^{\frac{1}{2}}
	\end{bmatrix}
	$}
\]

Now, using \cref{assum:int_dyn_contraction} and  $F$ in \eqref{eq:F}, we have that
\[
 F+F^\transp \le \varepsilon I + M + N
\]
where
\[
	\begin{split}
	M &= 
	\begin{bmatrix}
		-\varepsilon I & \mu^{-1} \bar{\Theta}^{-\transp}J^\transp_{2,1} \\[.5em]
		\mu^{-1} J_{2,1} 
		\bar{\Theta}^{-1} &
		-\varepsilon I + \gamma_0 M_{2,2}
	\end{bmatrix} \\[.5em]
	M_{2,2} &= 		\begin{bmatrix}
		-I & -\Psi^\transp \Pdiag^{\frac{1}{2}}\\
		-\Pdiag^{\frac{1}{2}} \Psi &
		- 
%		P^{\frac{1}{2}}
%		(\tfrac{\Gamma}{2} \Psi \Psi^\transp + \Psi \Psi^\transp \tfrac{\Gamma}{2})
%		P^{\frac{1}{2}}
		\Pdiag^{\frac{1}{2}}
		\Psi\Psi^\transp
		\Pdiag^{\frac{1}{2}}
		\end{bmatrix}
	\end{split}
\]
with $\varepsilon >0$ an arbitrary constant, and 
\begin{equation}
	\label{eq:N} 
	N = 
	\begin{bmatrix}
		-\min_j\{\lambda_j\} I & 0 & 0 \\
		0 & -\gamma_0 I & 0 \\
		0 & 0 & -A + \Pdiag^{\frac{1}{2}} Q \Pdiag^{\frac{1}{2}}
	\end{bmatrix}
\end{equation}
with
\[
%	N_{3,3} = 
%	-A +
%		P^{\frac{1}{2}}(AD
% ADDED		
%		+ \gamma_0 \Psi\Psi^\transp
% ADDED
%		-\tfrac{\Gamma}{2} \Psi \Psi^\transp - \Psi \Psi^\transp \tfrac{\Gamma}{2})P^{\frac{1}{2}}
Q = AD + \gamma_0 \Psi\Psi^\transp 
		-\Gamma \Psi \Psi^\transp - \Psi \Psi^\transp \Gamma
\]
In what follows, we choose $\varepsilon$ such that 
\begin{equation}
	\label{eq:epsilon} 
0 < \varepsilon < \min\{\gamma_0,\lambda_1,\dotsc,\lambda_m,\beta\}
\end{equation}
where $\beta > 0$ is taken from part (\textit{ii}) of \cref{assum:required_assumption}.

Using Schur's complement,
%and \cref{lem:diagonal_matrix_inequality} in \cref{sec:aux}, 
we see that $M_{2,2} \preceq 0$. Then, since $\mu>0$ is arbitrary and $\bar{\Theta}$ and $J_{2,1}$ are bounded on on $W \times \setreal^{\nv} \times \setreal^{\ntheta}$, uniformly in $t \ge 0$, we can use Schur's complement again to show that $M \preceq 0$ for a sufficiently large choice of $\mu(\varepsilon) > 0$. Hence for that choice it follows that  
\[
	\begin{split}
	\frac{d}{dt}\|\delta z(t)\|^2 
%	&= 
%	\delta z^\transp (F+F^\transp) \delta z\\
	&\le  
	\delta z(t)^\transp (N(t)+\varepsilon I) \delta z(t)\\
	&\le (\lambda_{\max}(N(t)) +\varepsilon)
	\|\delta z(t)\|^2
	\end{split}
\]
Solving for this inequality at every initial time $t\ge T$ yields
\[
	\|\delta z(t+T)\| \le 
	\|\delta z(t)\| \exp\big(\frac{1}{2}\int_t^{t+T}
	(\lambda_{\max}(N(\tau))+\varepsilon) d\tau\big)
\]

Now since $A$, $\Pdiag$ and $Q$ are all symmetric, \cite[Theorem 4.5.9]{horn_matrix_2012} and \cite[Corollary 4.3.15]{horn_matrix_2012} can be used to show that
\[\begin{split}
	\lambda_{\max}(-A+\Pdiag^{\frac{1}{2}}Q\Pdiag^{\frac{1}{2}}) 
	&\le \lambda_{\max}(-A) + \lambda_{\max}(\Pdiag)\lambda_{\max}(Q)\\
	&\le 
	-\underline{\alpha} + 
	(\min_j\big(\frac{\delta_j \alpha_j}{e^{2\,\alpha_j\,T}}\big))^{-1}
	\lambda_{\max}(Q)
	\end{split}
\]
for all $t \ge T$, where $\underline{\alpha} = \min\{\alpha_1,\dotsc,\alpha_m\}$, and where we have used $\lambda_{\max}(\Pdiag) = (\lambda_{\min}(\Rdiag))^{-1}$ as well as \eqref{eq:lower_bound_R}.

It finally follows from part (\textit{ii}) of  \cref{assum:required_assumption}, and the form of $N$ in \eqref{eq:N}, that for all $t \ge T$, there is a $T$ such that 
%\[
%	\|\delta z(t+T)\| \le \|\delta z(t)\| 
%	\exp(\tfrac{T}{2}\big(\max\{-\gamma_0,-\lambda_j,-\beta\}+\varepsilon\big)
%\]
\[
\int_t^{t+T}
	(\lambda_{\max}(N(\tau))+\varepsilon) d\tau
	\le (-\min\{\gamma_0,\lambda_j,\beta\}+\varepsilon)T
\]
Given our choice of $\varepsilon$ in \eqref{eq:epsilon}, the right hand side above is strictly negative. Hence, similarly to \cite[Corollary 3.1]{jouffroy_relaxed_2003}, we conclude that $\|\delta z(t)\| \to 0$ as $t \to 0$ and that the virtual system \eqref{eq:virtual_system} is exponentially contracting on the invariant set $W \times \setreal^{\nv} \times \setreal^{\ntheta}$. As a result, we have $\col(\hat{w},\hat{v},\hat{\theta}) \to \col(w,v,\theta)$ as $t\to\infty$ and the result is proven.

\subsection{Simulation parameters}
\label{app:parameters} 
The parameters in $\Phi(v,w)$ and $\bv(v,u)$ are given by $\mu_1^\Leak = \mu_2^\Leak = 0.3$, $\nernst^\Na = 55$, $\nernst^\K = -77$, $\nernst^\Leak = -54.4$, and $\nernst^\GABA = -80$; the parameters of the intrinsic gating variable dynamics \eqref{eq:m_dyn}-\eqref{eq:h_dyn} are given in the table in \cite[Appendix C.1]{burghi_adaptive_2021}; the parameters of the synaptic gating variable dynamics \eqref{eq:s_dyn} are given by $a_\syn = 2$, $b_\syn = 0.1$, $\rho_\syn = -45$ and $\kappa_\syn = 2$. The two neurons are excited with the control inputs 
$
		u_1(t) = 2 + \sin(2\pi t /10) + \sin(2\pi t/7) + \sin(2\pi t/4)$ and 
		$u_2(t) = 1+2\sin(2\pi t/9) + \sin(2\pi t/5)$. All simulations are performed using the Euler-Maruyama method with $dt = 10^{-4}$ ms. The initial conditions are given by $v_1(0) = \hat{v}_1(0) = 0$, $v_2(0)=\hat{v}_2(0) = -60$, 
$w^\Na(0) = (0,0.5,0,0.5)^\transp$, $w^K(0) = (0,0.5)^\transp$, $w^\GABA = (0,0.5)^\transp$, $\hat{w}^\Na(0) = (0.5,0,0.5,0)^\transp$, $\hat{w}^K(0) = (0.5,0)^\transp$, $\hat{w}^\GABA = (0.5,0)^\transp$, $\hat{\theta}^\Na = \hat{\theta}^\K = (78,78)^\transp$, $\hat{\theta}^\GABA = (0,0)^\transp$.		
	
%\subsection{Auxiliary results}
%\label{sec:aux} 
%\begin{lem}
%\label{lem:diagonal_matrix_inequality} 
%	Let $D \succ 0$ be a diagonal matrix, and let $\underline{d}=\lambda_{\min}(D)$. Then, for any $S\succeq 0$, we have
%	\[
%		D S + S D \succeq 2 \, \underline{d} \, S
%	\]
%\end{lem}
%\begin{proof}
%	Since $D$ is diagonal and $S$ is symmetric, $(D-\underline{d}I)S$ is symmetric and thus it has the same eigenvalues as $S(D-\underline{d}I)$. It thus follows that 
%\[
%	\begin{split}
%	(D-\underline{d})S + S(D-\underline{d}) &\ge 2 \lambda_{\min}((D-\underline{d})S)=0
%	\end{split}
%\]
%where the last equality follows from the fact that one of the rows of $(D-\underline{d})S$ is zero.
%\end{proof}

\bibliographystyle{IEEEtran}
\bibliography{references}

% Generated by IEEEtran.bst, version: 1.14 (2015/08/26)
\begin{thebibliography}{10}
\providecommand{\url}[1]{#1}
\csname url@samestyle\endcsname
\providecommand{\newblock}{\relax}
\providecommand{\bibinfo}[2]{#2}
\providecommand{\BIBentrySTDinterwordspacing}{\spaceskip=0pt\relax}
\providecommand{\BIBentryALTinterwordstretchfactor}{4}
\providecommand{\BIBentryALTinterwordspacing}{\spaceskip=\fontdimen2\font plus
\BIBentryALTinterwordstretchfactor\fontdimen3\font minus
  \fontdimen4\font\relax}
\providecommand{\BIBforeignlanguage}[2]{{%
\expandafter\ifx\csname l@#1\endcsname\relax
\typeout{** WARNING: IEEEtran.bst: No hyphenation pattern has been}%
\typeout{** loaded for the language `#1'. Using the pattern for}%
\typeout{** the default language instead.}%
\else
\language=\csname l@#1\endcsname
\fi
#2}}
\providecommand{\BIBdecl}{\relax}
\BIBdecl

\bibitem{knopfel_optical_2019}
T.~Knöpfel and C.~Song, ``\BIBforeignlanguage{en}{Optical voltage imaging in
  neurons: moving from technology development to practical tool},''
  \emph{\BIBforeignlanguage{en}{Nature Reviews Neuroscience}}, vol.~20, no.~12,
  pp. 719--727, Dec. 2019.

\bibitem{tang_colloquium_2018}
E.~Tang and D.~S. Bassett, ``\BIBforeignlanguage{en}{\textit{{Colloquium}} :
  {Control} of dynamics in brain networks},''
  \emph{\BIBforeignlanguage{en}{Reviews of Modern Physics}}, vol.~90, no.~3,
  Aug. 2018.

\bibitem{ribar_neuromodulation_2019}
L.~Ribar and R.~Sepulchre, ``Neuromodulation of {Neuromorphic} {Circuits},''
  \emph{IEEE Transactions on Circuits and Systems I: Regular Papers}, vol.~66,
  no.~8, pp. 3028--3040, Aug. 2019.

\bibitem{huys_efficient_2006}
Q.~J.~M. Huys, M.~B. Ahrens, and L.~Paninski, ``Efficient {Estimation} of
  {Detailed} {Single}-{Neuron} {Models},'' \emph{Journal of Neurophysiology},
  vol.~96, no.~2, pp. 872--890, Aug. 2006.

\bibitem{druckmann_novel_2007}
S.~Druckmann, Y.~Banitt, A.~Gidon, F.~Schürmann, H.~Markram, and I.~Segev, ``A
  {Novel} {Multiple} {Objective} {Optimization} {Framework} for {Constraining}
  {Conductance}-{Based} {Neuron} {Models} by {Experimental} {Data},''
  \emph{Frontiers in Neuroscience}, vol.~1, no.~1, pp. 7--18, Oct. 2007.

\bibitem{meliza_estimating_2014}
C.~D. Meliza, M.~Kostuk, H.~Huang, A.~Nogaret, D.~Margoliash, and H.~D.~I.
  Abarbanel, ``\BIBforeignlanguage{en}{Estimating parameters and predicting
  membrane voltages with conductance-based neuron models},''
  \emph{\BIBforeignlanguage{en}{Biological Cybernetics}}, vol. 108, no.~4, pp.
  495--516, Aug. 2014.

\bibitem{burghi_feedback_2021}
T.~B. Burghi, M.~Schoukens, and R.~Sepulchre,
  ``\BIBforeignlanguage{en}{Feedback identification of conductance-based
  models},'' \emph{\BIBforeignlanguage{en}{Automatica}}, vol. 123, p. 109297,
  Jan. 2021.

\bibitem{sorrell_brainmachine_2021}
E.~Sorrell, M.~E. Rule, and T.~O’Leary,
  ``\BIBforeignlanguage{en}{Brain–{Machine} {Interfaces}: {Closed}-{Loop}
  {Control} in an {Adaptive} {System}},'' \emph{\BIBforeignlanguage{en}{Annual
  Review of Control, Robotics, and Autonomous Systems}}, vol.~4, no.~1, pp.
  167--189, May 2021.

\bibitem{burghi_adaptive_2021}
T.~B. Burghi and R.~Sepulchre, ``Adaptive observers for biophysical neuronal
  circuits,'' \emph{Submitted to IEEE Transactions on Automatic Control.}, Oct.
  2021, under review, preprint available at https://arxiv.org/abs/2111.02176.

\bibitem{zhang_adaptive_2001}
Q.~Zhang and A.~Clavel, ``Adaptive observer with exponential forgetting factor
  for linear time varying systems,'' in \emph{40th {IEEE} {Conference} on
  {Decision} and {Control}}, Orlando, FL, USA, Dec. 2001, pp. 3886--3891.

\bibitem{farza_adaptive_2009}
M.~Farza, M.~M’Saad, T.~Maatoug, and M.~Kamoun,
  ``\BIBforeignlanguage{en}{Adaptive observers for nonlinearly parameterized
  class of nonlinear systems},'' \emph{\BIBforeignlanguage{en}{Automatica}},
  vol.~45, no.~10, pp. 2292--2299, Oct. 2009.

\bibitem{astrom_adaptive_2008}
K.~J. Åström and B.~Wittenmark, \emph{\BIBforeignlanguage{en}{Adaptive
  {Control}}}, 2nd~ed.\hskip 1em plus 0.5em minus 0.4em\relax Mineola, NY:
  Dover Publications, Jan. 2008.

\bibitem{vahidi_recursive_2005}
A.~Vahidi, A.~Stefanopoulou, and H.~Peng, ``Recursive least squares with
  forgetting for online estimation of vehicle mass and road grade: theory and
  experiments,'' \emph{Vehicle System Dynamics}, vol.~43, no.~1, pp. 31--55,
  Jan. 2005.

\bibitem{lohmiller_contraction_1998}
W.~Lohmiller and J.-J.~E. Slotine, ``On {Contraction} {Analysis} for
  {Non}-linear {Systems},'' \emph{Automatica}, vol.~34, no.~6, pp. 683--696,
  Jun. 1998.

\bibitem{izhikevich_dynamical_2007}
E.~M. Izhikevich, \emph{\BIBforeignlanguage{en}{Dynamical {Systems} in
  {Neuroscience}}}.\hskip 1em plus 0.5em minus 0.4em\relax Cambridge, MA: MIT
  Press, 2007.

\bibitem{bittanti_convergence_1990}
S.~Bittanti, P.~Bolzern, and M.~Campi, ``\BIBforeignlanguage{en}{Convergence
  and exponential convergence of identification algorithms with directional
  forgetting factor},'' \emph{\BIBforeignlanguage{en}{Automatica}}, vol.~26,
  no.~5, pp. 929--932, Sep. 1990.

\bibitem{parkum_recursive_1992}
J.~E. Parkum, N.~K. Poulsen, and J.~Holst, ``\BIBforeignlanguage{en}{Recursive
  forgetting algorithms},'' \emph{\BIBforeignlanguage{en}{International Journal
  of Control}}, vol.~55, no.~1, pp. 109--128, Jan. 1992.

\bibitem{fraccaroli_new_2015}
F.~Fraccaroli, A.~Peruffo, and M.~Zorzi, ``\BIBforeignlanguage{en}{A new
  recursive least squares method with multiple forgetting schemes},'' in
  \emph{\BIBforeignlanguage{en}{54th {IEEE} {Conference} on {Decision} and
  {Control}}}, Osaka, Japan, Dec. 2015, pp. 3367--3372.

\bibitem{boyd_distributed_2010}
S.~Boyd, ``\BIBforeignlanguage{en}{Distributed {Optimization} and {Statistical}
  {Learning} via the {Alternating} {Direction} {Method} of {Multipliers}},''
  \emph{\BIBforeignlanguage{en}{Foundations and Trends in Machine Learning}},
  vol.~3, no.~1, pp. 1--122, 2010.

\bibitem{aeyels_exponential_1999}
D.~Aeyels and J.~Peuteman, ``\BIBforeignlanguage{en}{On exponential stability
  of nonlinear time-varying di!erential equations},''
  \emph{\BIBforeignlanguage{en}{Automatica}}, vol.~35, no.~6, pp. 1091--1100,
  Jun. 1999.

\bibitem{jouffroy_relaxed_2003}
J.~Jouffroy, ``A relaxed criterion for contraction theory: {Application} to an
  underwater vehicle observer,'' in \emph{2003 {European} {Control}
  {Conference} ({ECC})}, Sep. 2003, pp. 2999--3004.

\bibitem{hodgkin_quantitative_1952}
A.~L. Hodgkin and A.~F. Huxley, ``A quantitative description of membrane
  current and its application to conduction and excitation in nerve,''
  \emph{The Journal of Physiology}, vol. 117, no.~4, pp. 500--544, 1952.

\bibitem{sharp_dynamic_1993}
A.~A. Sharp, M.~B. O'Neil, L.~F. Abbott, and E.~Marder,
  ``\BIBforeignlanguage{en}{Dynamic clamp: computer-generated conductances in
  real neurons},'' \emph{\BIBforeignlanguage{en}{Journal of Neurophysiology}},
  vol.~69, no.~3, pp. 992--995, Mar. 1993.

\bibitem{tang_-line_2019}
Y.~Tang, A.~Franci, and R.~Postoyan, ``On-line detection of qualitative
  dynamical changes in nonlinear systems: {The} resting-oscillation case,''
  \emph{Automatica}, vol. 100, pp. 17--28, Feb. 2019.

\bibitem{horn_matrix_2012}
R.~A. Horn and C.~R. Johnson, \emph{Matrix {Analysis}}, 2nd~ed.\hskip 1em plus
  0.5em minus 0.4em\relax Cambridge, UK: Cambridge University Press, 2012.

\end{thebibliography}

\end{document}